\newtheorem{theorem}{Theorem}
\newtheorem{lemma}[theorem]{Lemma}
\newcommand*{\Scale}[2][4]{\scalebox{#1}{$#2$}}
\begin{document}

\title{Power Allocation and Link Selection for Multicell Cooperative NOMA Hybrid VLC/RF Systems }

\setlength{\columnsep}{0.21 in}

\author{Mohanad~Obeed,
        Hayssam~Dahrouj,~\IEEEmembership{Senior Member,~IEEE,}
        Anas~M.~Salhab,~\IEEEmembership{Senior Member,~IEEE,}
        Anas~Chaaban,~\IEEEmembership{Member,~IEEE,}
        Salam~A.~Zummo,~\IEEEmembership{Senior Member,~IEEE,} and
        Mohamed-Slim~Alouini,~\IEEEmembership{Fellow,~IEEE}
        
}
\maketitle

\begin{abstract}
 This paper proposes and optimizes a cooperative non-orthogonal multiple-access (Co-NOMA) scheme in the context of multicell visible light communications (VLC) networks, as a means to mitigate inter-cell interference in Co-NOMA-enabled systems. Consider a network with multiple VLC access points (APs), where each AP serves two users using light intensity. In each cell, the weak user (the cell edge user) can be served either directly by the VLC AP, or through the strong user that can decode the weak user's message and forward it through the radio-frequency (RF) link. The paper then considers the problem of maximizing the network throughput under quality-of-service (QoS) constraints by allocating the powers of the users' messages and APs' transmit powers, and determining the serving links of each weak user (i.e., VLC or hybrid VLC/RF). The paper solves such a non-convex problem by first finding closed form solutions of the joint users' powers and link selection for a fixed AP power allocation. The APs' transmit powers are then iteratively solved in an outer loop using the golden section method. Simulation results show how the proposed solution and scheme improve the system sum-rate and fairness as compared to conventional non-orthogonal multiple-access (NOMA) schemes.

\end{abstract}
\begin{IEEEkeywords}
Visible light communication, non-orthogonal multiple-access (NOMA), cooperative NOMA.
\end{IEEEkeywords}
\IEEEpeerreviewmaketitle
\section{Introduction}
The recent escalating need for high data rates and the increasing number of connected devices has necessitated a thorough examination of the vast, unregulated, free visible light spectrum through visible light communications (VLC). The performance of multicell VLC systems is, however, limited by inter-cell interference, as the lamps used as transmitters in indoor environments are often mounted close to each other to achieve sufficient illumination levels \cite{Obeed2018}. This paper considers a cooperative non-orthogonal multiple-access (Co-NOMA) scheme with multiple VLC access points (APs), where each AP serves two users using light intensity. In each cell, the weak user can be served either directly by the VLC AP, or through the strong user that can decode the weak user's message and forward it through the radio-frequency (RF) link. The paper then focuses on maximizing the throughput, which is a function of the levels of allocated powers and links serving each weak user (i.e., VLC or hybrid VLC/RF).

The scheme proposed in this paper is related to the recent literature on interference management in VLC systems \cite{coop,ffr1l,hanzo_haas,TWC}. The authors in \cite{coop} show that supporting VLC networks by RF APs would mitigate the effect of interference. Another approach used to mitigate interference is through fractional frequency reuse \cite{ffr1l}. References \cite{hanzo_haas} and \cite{TWC} use the joint transmission and user-centric design to cancel or decrease the interference levels. Moreover, similar to the non-orthogonal multiple-access (NOMA) scheme applied in classical RF networks \cite{ding2017survey}, recent works \cite{kizilirmak2015non, yin2016performance, zhang2017user} apply the NOMA principle to VLC networks and show that NOMA scheme outperforms the orthogonal-frequency division multiple-access (OFDMA) scheme \cite{kizilirmak2015non}, and the orthogonal multiple-access (OMA) scheme \cite{ yin2016performance}. Reference \cite{zhang2017user}, on the other hand, investigates the advantages harvested by user grouping and power allocation for NOMA-enabled VLC systems using NOMA. As a means to further improve wireless systems performance, Co-NOMA has been recently proposed to exploit the redundant information to strengthen the received signal-to-noise-ratio (SNR) at the weak receivers in RF networks \cite{Liu2016}, and in VLC systems \cite{CoNOMA}. All the aforementioned references, however, focus on a single-cell case and ignore the potential inter-cell interference, unlike our paper which considers a multi-cell VLC system scenario.

The paper considers a multi-cell Co-NOMA VLC system, where each AP serves two users using light intensity. In each cell, the weak user can be served either directly by the VLC AP, or through the strong user that can decode the weak user's message and forward it through the RF link. The paper then formulates the problem of maximizing the sum-rate under QoS, APs' transmit power, and connectivity constraints so as to allocate the powers of the  users' messages, link selection vector, and the transmit power of each AP. The paper tackles such a difficult non-convex optimization problem iteratively by first finding closed-form solutions of the users' powers and link selection vector problem, under fixed AP power scenario. We then find a solution for the AP transmit power in an outer loop using the golden section method. Simulation results show that the proposed solution and scheme outperform the NOMA scheme in terms of sum-rate and fairness.
\section{System Model and Problem Formulation}
\begin{figure} [!t]
\centering
\includegraphics[width=3in]{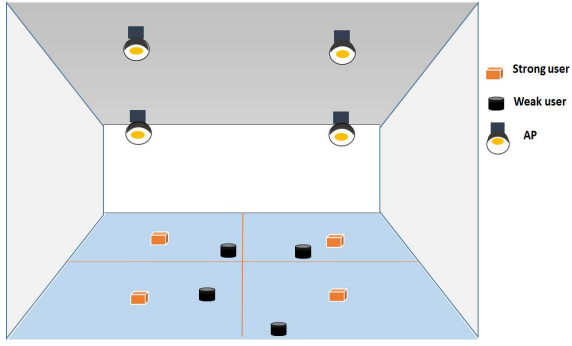}
\caption{An example of system model with 4 cells.}
\label{SMk}
\end{figure}
%\begin{figure}
%\label{SM}
%    \centering
%    \label{SM1}
%  \subfloat[An example of system model with 4 APs]{\includegraphics[trim=0.5cm 7cm 0cm 12cm width=0.2\linewidth, height=3.6cm]{SM77.eps}}
%
%  \subfloat[Distribution areas of the strong and weak users]{
%        \includegraphics[trim=0cm 3cm 0cm 7cm width=0.4\linewidth, height=3.6cm]{Cell7.eps}}
%    \label{SM2}
%    \caption{System Model}
%\end{figure}
%\begin{figure}[!t]
%    \centering
%  \includegraphics[trim=-2cm 6cm 16cm 11cm width=0.7\linewidth, height=3.8cm]{SM_IPE.eps}
%  %\includegraphics[width=3in]{SM77.eps}
%    \caption{An example of system model with 4 cells.}
%    \label{SMk}
%\end{figure}
%\begin{figure}[!t]
%\centering
% \includegraphics[ width=0.5\linewidth, height=3.8cm]{Cell2}
%    \caption{Distribution areas of the strong and weak users.}
%    \label{SM2}
%\end{figure}
\subsection{System Model}
The paper considers a system model, consisting of $N$ APs fixed at the ceiling, each serving two users as shown in Fig. \ref{SMk}. We assume that each user is served by the closest AP. In each cell, the two users are distributed in a way that one of them (strong user) is around the cell center and the other (weak user) is located near the cell edge. We define a parameter $\alpha$ that indicates the average interference received by the weak users. Increasing $\alpha$ limits the distribution area of the weak users to be closer to the cell edge.
 %within a circle around the center cell and the other (weak user) is located outside a square closer to the cell-edge. Increasing the parameter $\alpha$ in Fig. \ref{SM2} (with fixing circle A) shrinks the area of the weak users to be closer to the cell-edge, which increases the average received interference at the weak users.
 All the VLC APs in the system share the whole available VLC bandwidth, which leads to inter-cell interference. We assume that the strong user can also act as a relay, which can decode the weak user's message, harvest the energy through the received light intensity, and then use this energy to forward the decoded message to the weak user using the RF link. The system, therefore, can serve the weak users either by the VLC AP directly, or by the paired strong user through the hybrid VLC/RF link. The transmitted signal from the AP $k$ to the weak user $w$ and the strong user $s$ is  $y_k=\nu\sqrt{P_{s,k}}s_{s,k}+\nu\sqrt{P_{w,k}}s_{w,k}+\nu b_k,$
where $P_{s,k}$ and $P_{w,k}$ are the powers assigned to the strong and weak users' symbols $s_{s,k}$ and $s_{w,k}$, respectively, $b$ is the direct-current (DC) that must be added to guarantee that the transmitted signal is non-negative, where $E[\vert s_{s,k}\vert^2]=1$, $E[\vert s_{w,k}\vert^2]=1$, and $\nu$ is the proportionality factor of the electric-to-optical power conversion. Because we have $N$ APs transmitting data using the visible light intensity, the received signals at the strong user associated to the $k$th AP is given by
\begin{multline}
z_{s,k}=\rho\nu h_{s,k}\sqrt{P_{s,k}}s_{s,k}+\rho\nu h_{s,k}\sqrt{P_{w,k}}s_{w,k}\\
\Scale[0.9]{+\rho\nu\sum_{q=1,q\neq k}^N \bigg( h_{s,q}\sqrt{P_{s,q}}s_{s,q}+h_{s,q}\sqrt{P_{w,q}}s_{w,q}+b\bigg)+n,}
\end{multline} where $\rho$ is the optical-to-electrical conversion factor, $h_{s,k}$ is the VLC channel between the AP $k$ and the strong user in cell $k$, and $n$ is the noise components which can be modeled as real zero-mean additive white Gaussian noise (AWGN) variable with variance $\sigma^2 = N_vB_v$, where $N_v$ is the noise power spectral density (PSD) and $B_v$ is the modulation bandwidth. Each strong user then decodes the weak user's signal and uses the received DC signal to harvest energy. The harvested energy then can be used to forward the weak user's signal through the RF link \cite{CoNOMA}. Therefore, the achievable rate of the strong user of the $k$th cell can be lower bounded by \cite{chaaban2016capacity}
\begin{equation}
\label{Rsk}
\Scale[1.08]{R_{s,k}=\frac{B_v}{2}\log_2\left( 1+\frac{c\nu^2\rho^2 h_{s,k}^2P_{s,k}}{B_v N_v+c\nu^2\rho^2\sum_{q=1,q\neq k}^{N_A} p_qh_{s,q}^2}\right)},
\end{equation}
where $p_q=P_{w,q}+P_{s,q}$ is the transmit power at the AP $q$, $c=\min\{\frac{1}{2\pi e}, \frac{e b^2}{I_H^22\pi}\}$, $e$ is Euler's number, and $I_H$ is the allowable maximum input current to the LED.
The achievable data rate of the weak user, served by the $k$th AP through the direct VLC link, is given by
\begin{multline}
\label{Rwk}
R_{w,k}=\frac{B_v}{2}\log_2\bigg( 1+\\
\frac{c\nu^2\rho^2 h_{w,k}^2P_{w,k}}{B_v N_v+c\nu^2\rho^2P_{s,k}h_{w,k}^2+c\nu^2\rho^2\sum_{q=1,q\neq k}^{N_A} p_qh_{w,q}^2}\bigg),
\end{multline}
while the achievable data rate of the weak user, served by the $k$th AP through the hybrid VLC/RF link is given by
\begin{equation}
R_{w,k}^{HL}=\min \left(R_{w\rightarrow s,k},R_{w,s,k}^{RF} \right),
\end{equation}
where \begin{multline}
R_{w\rightarrow s,k}=\frac{B_v}{2}\log_2\bigg( 1+\\
\frac{c\nu^2\rho^2 h_{s,k}^2P_{w,k}}{B_v N_v+c\nu^2\rho^2P_{s,k}h_{s,k}^2+c\nu^2\rho^2\sum_{q=1,q\neq k}^{N_A} p_qh_{s,q}^2}\bigg)
\end{multline}
 is the achievable data rate of the weak user received at the strong user and
 \begin{equation}
 R_{w,s,k}^{RF}=B_{RF}\log_2\left(1+\frac{P_{RF,s,k}h_{RF,w,s,k}^2}{B_{RF}N_{RF}}\right)
\end{equation}
  is the achievable data rate of the weak user that can be provided by the strong user through the RF link, where $P_{RF,s,k}$ is the harvested power at the strong user in the cell $k$, which depends on the DC biases at all the APs \cite{obeed2018dc}, $h_{RF,w,s,k}$ is the RF channel between the strong user and the weak user in cell $k$, $B_{RF}$ is the RF bandwidth assigned for one user, and $N_{RF}$ is the RF noise power spectral density.
\subsection{Problem Formulation}
Our goal in this paper is to maximize the sum-rate of the system under QoS constraints, user connectivity, and maximum transmit power constraints by finding the powers of the users' messages and the link selection vector of the weak users. Define $\mathbf{x}=[x_1,x_2,\ldots,x_N]$ as the link selection vector, where $x_k=1$ means that the weak user in cell $k$ is served through the VLC/RF link, and $x_k=0$ means that the weak user is served directly through the VLC link from AP $k$.
The problem, then, can be formulated as follows
\begin{subequations}
\label{SRT}
\begin{eqnarray}
&\displaystyle\max_{\mathbf{p}, \mathbf{P}, \mathbf{x}}& \sum_{k=1}^N \left( R_{s,k}+(1-x_{k})R_{w,k}+x_{k}R_{w,k}^{HL}\right)\\
\label{SRTb}
&\text{s.t.}&   R_{s,k}\geq R_{th},\ \forall k\\
\label{SRTc}
&& (1-x_{k})R_{w,k}+x_{k}R_{w,k}^{HL}\geq R_{th},\ \forall k\\
\label{SRTd}
&&   P_{w,k}+P_{s,k}\leq p_k,\ \forall k=1,\ldots N\\
\label{SRTg}
&& x_{k}\in \lbrace 0,1  \rbrace,\  \forall k=1,\ldots N,
%\label{SRTh}
%&& 0\leq P_{s,k}\leq P_{w,k}, \  \forall k=1,\ldots N,
\end{eqnarray}
\end{subequations}
where $p_k$ is the total transmit power of the $k$th AP, i.e., the $k$th entry of the power vector defined by $\mathbf{p}$. Constraints (\ref{SRTb}) and (\ref{SRTc}) are imposed to guarantee the required QoS for the strong and weak users, respectively. Constraints (\ref{SRTd}) denote the transmit power constraints.  Constraint (\ref{SRTg}) is imposed to guarantee that each weak user is either connected directly to the VLC AP, or through the strong user by using the hybrid VLC/RF link.   Problem (\ref{SRT}) is a mixed-integer non-convex optimization problem with a non-concave objective function. The paper, therefore, proposes solving the problem in an iterative way, where the users' powers and link selection vector problem are first solved under fixed AP power scenario. We then find a solution for the AP transmit power in an outer loop via the golden section method. The simulations illustrate the efficiency of the proposed solution as compared to NOMA.
\section{Joint Power Allocation and Link Selection}
To solve problem (\ref{SRT}), we first find the optimal joint solution of the users' power and link selection vectors, for a fixed transmit power vector $\mathbf{p}$. In the outer loop, we then solve the transmit power so as to maximize the objective function in (\ref{SRT}). For a fixed power vector $\mathbf{p}$, problem (\ref{SRT}) can be equivalently divided into $N$ problems, where each problem can be solved at the corresponding AP. Hence, the problem at AP $k$ can be formulated as follows
\begin{subequations}
\label{SRT1}
\begin{eqnarray}
&\displaystyle\max_{ P_{s,k}, P_{w,k}, \mathbf{x}}&  R_{s,k}+(1-x_{k})R_{w,k}+x_{k}R_{w,k}^{HL}\\
\label{SRT1b}
&\text{s.t.}&   R_{s,k}\geq R_{th},\ \\
\label{SRT1c}
&& (1-x_{k})R_{w,k}+x_{k}R_{w,k}^{HL}\geq R_{th}, \\
\label{SRT1d}
&&   P_{w,k}+P_{s,k}\leq p_k, \\
\label{SRT1g}
&& x_{k}\in \lbrace 0,1  \rbrace,  \\
\label{SRT1h}
&& 0\leq P_{s,k}\leq P_{w,k}.
\end{eqnarray}
\end{subequations}
%\begin{subequations}
%\label{SR}
%\begin{eqnarray}
%\nonumber
%&\displaystyle\max_{\mathbf{p}, \mathbf{P}, \mathbf{x}}&  R_{j,k}+(1-x_{k})+(1-x_{k})R_{i,k}+x_{k}\min \min \left(R_{i\rightarrow j,k},R_{i,j}^{RF} \right)\\
%\label{SRb}
%&\text{s.t.}&   R_{k,j}\geq R_{th}\\
%\label{SRc}
%&& (1-x_{k})R_{k,i}+x_{k}\min(R_{k,j\rightarrow i},R_{k,i}^{RF})\geq R_{th}\\
%\label{SRd}
%&&   P_{k,i}+P_{k,j}\leq p_k\\
%\label{SRe}
%&& p_k=(I_H-b_k)^2\\
%\label{SRf}
%&& P_{rf,j}=f \rho \nu V_t \left(b_k h_{k,j}+\sum_{q=1,q\neq k}^{N_A}b_qh_{q,j}\right)\ln\left(1+\frac{\rho \nu \left(b_k h_{k,j}+\sum_{q=1,q\neq k}^{N_A}b_qh_{q,j}\right)}{I_0}\right)\\
%\label{SRg}
%&& x_{k}\in \lbrace 0,1  \rbrace,\\
%\label{SRh}
%&& 0\leq P_{k,j}\leq P_{k,i},\\
%\label{SRi}
%&&\frac{I_H+I_L}{2}\leq b\leq I_H,
%\end{eqnarray}
%\end{subequations}
Problem (\ref{SRT1}) is still not convex, because of the binary variables and the interference terms in the expression of $R_{w,k}$. In the following, however, we provide a closed-form solution for the cases $x_k=1$ and $x_k=0$, respectively. First, we note that constraint (\ref{SRT1d}) should be satisfied with equality at the optimal solution since, otherwise, one can increase $P_{w,k}$ to achieve the equality, which increases the objective function without violating the constraints. This means that problem (\ref{SRT1}) can be expressed in terms of $P_{s,k}$ and $P_{w,k}$, which can be found by plugging $P_{w,k}=p_k-P_{s,k}$ in (\ref{SRT1}). Now, we discuss the solution for the two cases $x_k=0$ and $x_k=1$.
\subsection{Case I: $x_k$=0}
In this case, the weak user in cell $k$ is served directly by the VLC AP $k$.
Define the variables $\Psi_{s,k}$ and $\Psi_{w,k}$ as $\Psi_{s,k}=\frac{c\nu^2\rho^2h_{s,k}^2}{Z_{s,k}}$, and  $\Psi_{w,k}=\frac{c\nu^2\rho^2h_{w,k}^2}{Z_{w,k}}$, where $Z_{s,k}=B_v N_v+c\nu^2\rho^2\sum_{q=1,q\neq k}^{N_A} p_qh_{s,q}^2$ and $Z_{w,k}=B_v N_v+c\nu^2\rho^2\sum_{q=1,q\neq k}^{N_A} p_qh_{w,q}^2$. Therefore, the optimization problem (\ref{SRT1}) can be written as
\begin{subequations}
\label{SR2}
\begin{eqnarray}
\nonumber
&\displaystyle\max_{P_{s,k}}&  \frac{B_v}{2}\log_2\left( 1+\Psi_{s,k}P_{s,k}\right)\\
&& \ \ \ \ + \frac{B_v}{2}\log_2\left( 1+\frac{p_k-P_{s,k}}{1/\Psi_{w,k}+ P_{s,k}}\right)\\
\label{SR2b}
&\text{s.t.}& R_{w,k}\geq R_{th},\\
\label{SR2c}
&& R_{s,k}\geq R_{th},\\
\label{SR2d}
&&0\leq P_{s,k}\leq \frac{1}{2}p_k.
\end{eqnarray}
\end{subequations}
 Because $P_{w,k}=p_k-P_{s,k}$, constraint (\ref{SR2d}) implies that $0\leq P_{s,k}\leq P_{w,k}$ and $P_{s,k}+ P_{w,k}=p_k$.
\begin{lemma}
 Define the variables $A_{s,k}$ and $C_{w,k}$ as $A_{s,k}=\max(0,\frac{2^{\frac{2R_{th}}{B_v}}-1}{\Psi_{s,k}})$ and $C_{w,k}=\min(\frac{1}{2}p_k,\frac{1+\Psi_{w,k}p_k}{\Psi_{w,k}2^{\frac{2R_{th}}{B_v}}}-\frac{1}{\Psi_{w,k}})$,  the optimal value of $P_{s,k}$, when $x_k=0$, is given by $P_{s,k}^*=P_{s,k,0}$, where
 \begin{equation}
\label{Pjk0}
P_{s,k,0}= \begin{cases} A_{s,k}, & $if  $\Psi_{s,k}<\Psi_{w,k}, \\
C_{w,k}, & $otherwise$.
\end{cases}
\end{equation}
\end{lemma}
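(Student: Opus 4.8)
The plan is to exploit the fact that, once constraint (\ref{SRT1d}) is active, the problem collapses to a one-dimensional optimization over $P_{s,k}$ whose objective turns out to be \emph{monotone}. As already argued in the text, at optimality $P_{w,k}=p_k-P_{s,k}$, so I substitute this into (\ref{SR2}) and work entirely with the scalar objective
\begin{equation*}
f(P_{s,k})=\frac{B_v}{2}\log_2\!\big(1+\Psi_{s,k}P_{s,k}\big)+\frac{B_v}{2}\log_2\!\Big(1+\tfrac{p_k-P_{s,k}}{1/\Psi_{w,k}+P_{s,k}}\Big).
\end{equation*}
The feasible set for $P_{s,k}$ is a closed interval, so the argument reduces to (i) determining the direction of monotonicity of $f$, and (ii) identifying the correct endpoint of that interval.

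For step (i) I would first note that the second term simplifies, since $1+\tfrac{p_k-P_{s,k}}{1/\Psi_{w,k}+P_{s,k}}=\tfrac{1/\Psi_{w,k}+p_k}{1/\Psi_{w,k}+P_{s,k}}$, so that $f(P_{s,k})=\tfrac{B_v}{2}\log_2(1+\Psi_{s,k}P_{s,k})-\tfrac{B_v}{2}\log_2(1/\Psi_{w,k}+P_{s,k})+\text{const}$. Differentiating gives
\begin{equation*}
f'(P_{s,k})=\frac{B_v}{2\ln 2}\,\frac{\Psi_{s,k}/\Psi_{w,k}-1}{\big(1+\Psi_{s,k}P_{s,k}\big)\big(1/\Psi_{w,k}+P_{s,k}\big)},
\end{equation*}
where the two $\Psi_{s,k}P_{s,k}$ terms in the numerator cancel. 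This cancellation is the crux of the whole lemma: the numerator is the \emph{constant} $\Psi_{s,k}/\Psi_{w,k}-1$, and the denominator is strictly positive for $P_{s,k}\ge 0$. Hence $f'$ keeps a fixed sign equal to that of $\Psi_{s,k}-\Psi_{w,k}$, so $f$ is strictly decreasing when $\Psi_{s,k}<\Psi_{w,k}$ and nondecreasing otherwise.

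For step (ii) I would translate the QoS constraints into explicit bounds on $P_{s,k}$. Since $R_{s,k}=\tfrac{B_v}{2}\log_2(1+\Psi_{s,k}P_{s,k})$, constraint (\ref{SR2c}) rearranges to $P_{s,k}\ge(2^{2R_{th}/B_v}-1)/\Psi_{s,k}$, which together with $P_{s,k}\ge 0$ yields the lower bound $A_{s,k}$. Similarly, writing $R_{w,k}=\tfrac{B_v}{2}\log_2\!\big(\tfrac{1+\Psi_{w,k}p_k}{1+\Psi_{w,k}P_{s,k}}\big)$, constraint (\ref{SR2b}) rearranges to $P_{s,k}\le\tfrac{1+\Psi_{w,k}p_k}{\Psi_{w,k}2^{2R_{th}/B_v}}-\tfrac{1}{\Psi_{w,k}}$, which combined with the box bound $P_{s,k}\le\tfrac12 p_k$ from (\ref{SR2d}) gives the upper bound $C_{w,k}$. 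Thus the feasible interval is exactly $[A_{s,k},C_{w,k}]$.

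Combining the two steps finishes the proof: a monotone function on an interval attains its maximum at an endpoint, so $P_{s,k}^*=A_{s,k}$ when $f$ is decreasing (i.e.\ $\Psi_{s,k}<\Psi_{w,k}$) and $P_{s,k}^*=C_{w,k}$ otherwise, matching (\ref{Pjk0}); in the boundary case $\Psi_{s,k}=\Psi_{w,k}$ the objective is constant so $C_{w,k}$ is trivially optimal as well. I expect the only genuinely delicate point to be verifying the derivative simplification (the term cancellation and the sign of the denominator); the constraint manipulations are routine algebra, and feasibility of the interval $[A_{s,k},C_{w,k}]$ is guaranteed by the QoS constraints being assumed satisfiable.
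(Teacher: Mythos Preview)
Your proposal is correct and follows essentially the same approach as the paper: both arguments rewrite the constraints as the interval $A_{s,k}\le P_{s,k}\le C_{w,k}$, differentiate the one-dimensional objective, observe that the derivative has a fixed sign determined by $\Psi_{s,k}-\Psi_{w,k}$, and conclude that the maximum is attained at the appropriate endpoint. Your presentation is slightly more explicit (you carry out the simplification of the second log term and the constraint algebra in full), but the logical structure and the key observation---the cancellation that makes $f'$ have constant sign---are identical to the paper's proof.
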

\begin{proof}
Based on the above definitions of $A_{s,k}$ and $C_{w,k}$, the constraints in problem (\ref{SR2}) can be rewritten as
\begin{equation}
\label{SRR2b}
A_{s,k} \leq P_{s,k} \leq C_{w,k}.
\end{equation}
%\begin{subequations}
%\label{SRR2}
%\begin{eqnarray}
%\nonumber
%&\displaystyle\max_{P_{s,k}}&  \frac{B_v}{2}\log_2\left( 1+\Psi_{s,k}P_{s,k}\right)\\
%&& \ \ \ \ + \frac{B_v}{2}\log_2\left( 1+\frac{p_k-P_{s,k}}{1/\Psi_{w,k}+ P_{s,k}}\right)\\
%\label{SRR2b}
%&\text{s.t.}& A_{s,k} \leq P_{s,k} \leq C_{w,k}.
%\end{eqnarray}
%\end{subequations}
The derivative of the utility function in (\ref{SR2}) can be written as:
\begin{equation}
\label{dpj2}
\Scale[0.98]{\frac{d}{dP_{s,k}}(R_{s,k}+R_{w,k})= \frac{B_v}{2}\left(\frac{1}{1/\Psi_{s,k}+P_{s,k}} -\frac{1}{1/\Psi_{w,k}+P_{s,k}} \right)}.
\end{equation}
Equation (\ref{dpj2}) implies that the objective function in (\ref{SR2}) is either increasing in $P_{s,k}$ if $\Psi_{s,k}>\Psi_{w,k}$, or decreasing if $\Psi_{s,k}<\Psi_{w,k}$. This means that the optimal value of $P_{s,k}$ is either the minimum bound or the maximum bound of constraint (\ref{SRR2b}). From the above, we conclude that the optimal value of $P_{s,k}$, when $x_k=0$, is given by  $P_{s,k}=P_{s,k,0}$, where $P_{s,k,0}$ is given by (\ref{Pjk0}).
\end{proof}
\subsection{Case II: $x_k$=1}
In this case, the weak user in cell $k$ is served by the strong user through the hybrid VLC/RF link. Hence, problem (\ref{SRT1}) can be written as follows
 \begin{subequations}
\label{SR3}
\begin{eqnarray}
&\displaystyle\max_{P_{s,k}}&  R_{s,k}+ \min \left(R_{w\rightarrow s,k},R_{w,s,k}^{RF} \right)\\
\label{SR3b}
&\text{s.t.}& R_{s,k}\geq R_{th},\\
\label{SR3c}
&& \min(R_{s\rightarrow w,k},R_{w,s,k}^{RF})\geq R_{th},\\
\label{SR3d}
&&0\leq P_{s,k}\leq \frac{1}{2}p_k.
\end{eqnarray}
\end{subequations}
%where $\Psi_{rf,i,j}=\frac{h_{r,i,j}^2}{B_{rf}N_{rf}}$.
%Lets denote the objective function in (\ref{SR3}) as
%\begin{equation}
%F=R_{k,j}+\min(R_{k,i}^{RF}, R_{k,j\rightarrow i})
%\end{equation}
\begin{lemma}
Define the variables $\bar{A}_{s,k}$ and $B_{s,k}$ as $\bar{A}_{s,k}=\max(A_{s,k},\frac{1+\Psi_{s,k}p_k-2^{R_{w,s,k}^{RF}/B_v}}{\Psi_{s,k}2^{R_{w,s,k}^{RF}/B_v}})$, and $B_{s,k}=\min(\frac{1}{2}p_k,\frac{1+\Psi_{s,k}p_k}{\Psi_{s,k}2^{\frac{2R_{th}}{B_v}}}-\frac{1}{\Psi_{s,k}})$, the optimal power allocation of problem (\ref{SR3}) is given by $P_{s,k}^*=P_{s,k,1}$, where $P_{s,k,1}$ is any value within the interval $(\bar{A}_{s,k},B_{s,k})$ (i.e., $\bar{A}_{s,k}\leq P_{s,k,1}\leq B_{s,k}$).
\end{lemma}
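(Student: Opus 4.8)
The plan is to reduce the scalar problem (\ref{SR3}) to an analysis of the objective as a function of $P_{s,k}$ alone (recalling $P_{w,k}=p_k-P_{s,k}$) and to exploit a cancellation that renders the objective flat over a range of $P_{s,k}$. First I would rewrite the two VLC rates in terms of $\Psi_{s,k}$. Substituting $P_{w,k}=p_k-P_{s,k}$ and collapsing the interference denominator, $R_{s,k}=\frac{B_v}{2}\log_2(1+\Psi_{s,k}P_{s,k})$, which is increasing in $P_{s,k}$, while $R_{w\rightarrow s,k}=\frac{B_v}{2}\log_2\!\big(\tfrac{1+\Psi_{s,k}p_k}{1+\Psi_{s,k}P_{s,k}}\big)$, which is decreasing in $P_{s,k}$. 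I would also note that $R_{w,s,k}^{RF}$ depends only on the harvested DC power and the RF channel, hence is a constant with respect to $P_{s,k}$ in this subproblem.

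The key step is the algebraic identity $R_{s,k}+R_{w\rightarrow s,k}=\frac{B_v}{2}\log_2(1+\Psi_{s,k}p_k)$, i.e., the increasing and decreasing logarithms cancel to a quantity independent of $P_{s,k}$. Using this, I would split the feasible axis at the threshold $P^{th}$ defined by $R_{w\rightarrow s,k}=R_{w,s,k}^{RF}$. For $P_{s,k}\le P^{th}$ the inner minimum equals the constant $R_{w,s,k}^{RF}$, so the objective is $R_{s,k}+R_{w,s,k}^{RF}$ and is strictly increasing; for $P_{s,k}\ge P^{th}$ the minimum equals $R_{w\rightarrow s,k}$, so by the identity the objective is the constant $\frac{B_v}{2}\log_2(1+\Psi_{s,k}p_k)$. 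Continuity at $P^{th}$ then shows the objective is non-decreasing and attains its global maximum on the entire sub-interval $P_{s,k}\ge P^{th}$.

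It remains to intersect this optimal set with the feasible region. I would translate the QoS constraint $R_{s,k}\ge R_{th}$ into the lower bound $P_{s,k}\ge A_{s,k}$ (exactly as in Lemma 1), solve $R_{w\rightarrow s,k}=R_{w,s,k}^{RF}$ for $P^{th}$ to recover the second argument of the max defining $\bar A_{s,k}$, and translate the weak-user QoS constraint (its active branch $R_{w\rightarrow s,k}\ge R_{th}$, valid since feasibility forces $R_{w,s,k}^{RF}\ge R_{th}$ for the constant branch) together with $P_{s,k}\le\frac{1}{2}p_k$ into the upper bound $B_{s,k}$. Combining the lower bound $\bar A_{s,k}=\max(A_{s,k},P^{th})$ with $B_{s,k}$ shows that every $P_{s,k}\in[\bar A_{s,k},B_{s,k}]$ is both feasible and achieves the maximum, which is the assertion.

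The main obstacle, conceptually, is recognizing the cancellation identity in the second step: without it one might expect a unique interior optimizer, whereas the identity reveals the objective is constant over a full interval, so the optimum is the whole set $[\bar A_{s,k},B_{s,k}]$ rather than a single point. The only delicate bookkeeping is checking the thresholds for consistency — that $P^{th}$, where the relay and RF rates meet, is the correct left endpoint, and that the weak-user QoS bound (not the strong-user one) governs the right endpoint — which becomes routine once the monotonicity of each rate in $P_{s,k}$ has been established.
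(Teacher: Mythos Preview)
Your proposal is correct and follows essentially the same approach as the paper: both arguments hinge on the observation that $R_{s,k}+R_{w\rightarrow s,k}$ is constant in $P_{s,k}$ (you via the algebraic identity, the paper via the vanishing derivative), then restrict attention to the region $R_{w\rightarrow s,k}\le R_{w,s,k}^{RF}$ and translate the remaining constraints into the interval $[\bar A_{s,k},B_{s,k}]$. Your explicit threshold $P^{th}$ and two-regime analysis is a slightly more detailed packaging of the paper's one-line monotonicity argument (``otherwise increase $P_{s,k}$''), but the substance is identical.
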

\begin{proof}
In problem (\ref{SR3}), it can be seen that $R_{w,s,k}^{RF}$ is a fixed function of $P_{s,k}$. It can also be observed that $R_{w\rightarrow s,k}$ is a decreasing function of $P_{s,k}$, and that $R_{s,k}$ is an increasing function of $P_{s,k}$. This means that the optimal $P_{s,k}$ must satisfy $R_{w\rightarrow s,k}\leq R_{w,s,k}^{RF}$ since, otherwise, we can increase $P_{s,k}$, which increases the objective function without violating the constraints.
 Hence, we replace the term $\min \left(R_{w\rightarrow s,k},R_{w,s,k}^{RF} \right)$ by $R_{w\rightarrow s,k}$ in problem (\ref{SR3}), and add a constraint $R_{w\rightarrow s,k}\leq R_{w,s,k}^{RF}$ instead.  Thus, problem (\ref{SR3}) can be rewritten as:
 \begin{subequations}
\label{SR4}
\begin{eqnarray}
&\displaystyle\max_{P_{s,k}}&  R_{s,k}+  R_{w\rightarrow s,k}\\
\label{SR4b}
&\text{s.t.}& R_{s,k}\geq R_{th},\\
\label{SR4c}
&& R_{w\rightarrow s,k}\geq R_{th},\\
\label{SR4dd}
&&R_{w\rightarrow s,k}\leq R_{w,s,k}^{RF},\\
\label{SR4d}
&&0\leq P_{s,k}\leq \frac{1}{2}p_k.
\end{eqnarray}
\end{subequations}
Take the derivative of the utility in (\ref{SR4}), we get:
\begin{equation}
\label{dpj3}
\Scale[0.9]{\frac{d}{dP_{s,k}}(R_{s,k}+R_{w\rightarrow s,k})=\frac{B_v}{2}\left(\frac{1}{1/\Psi_{s,k}+P_{s,k}} -\frac{1}{1/\Psi_{s,k}+P_{s,k}} \right)}.
\end{equation}
It can be readily seen that equation (\ref{dpj3}) is equal to zero, which implies that the objective function in (\ref{SR4}) is constant with respect to $P_{s,k}$. Any feasible $P_{s,k}$ can, therefore, be conveniently chosen, i.e., such a choice does not affect the optimal solution of (\ref{SR4}).
 Constraints (\ref{SR4b})-(\ref{SR4d}) can be rewritten as follows
\begin{equation}
\label{cons}
\bar{A}_{s,k}\leq P_{s,k} \leq B_{s,k},
\end{equation}
Thus, the optimal solution of problem (\ref{SR4}) is given by $P_{s,k}=P_{s,k,1}$, where $P_{s,k,1}$ can be chosen conveniently from the feasible set: $\bar{A}_{s,k}\leq P_{s,k,1}\leq B_{s,k}$.
\end{proof}

%For the sake of the fairness, we allocate the powers so that the data rates of the users are equal. In other words, we find the powers that achieve $R_{s,k}=R_{w\rightarrow s,k}$, which  maximizes the fairness and the sum-rate at the same time, if permitted by constraints (\ref{SR4b})-(\ref{SR4d}). By solving equation $R_{s,k}=R_{s\rightarrow w,k}$, we obtain $P_{s,k}=\eta_{s,k}$, where $\eta_{s,k}$ is given by
%\begin{equation}
%\eta_{s,k}=\frac{\sqrt{1+\Psi_{s,k}p_k}-1}{\Psi_{s,k}}.
%\end{equation}

In our simulation results, we choose to set $P_{s,k,1}$ to jointly achieve (\ref{cons}) and maximize the system fairness simultaneously. Hence, $P_{s,k,1}$ is expressed as follows
\begin{equation}
\label{Pjk1}
P_{s,k,1}= \begin{cases} \bar{A}_{s,k}, & $if  $\eta_{s,k}<A_{s,k}, \\
\eta_{s,k}, & $if $ A_{s,k}\leq \eta_{s,k} \leq B_{s,k}\\
B_{s,k}, & $if $ \eta_{s,k} > B_{s,k},
\end{cases}
\end{equation}
where $\eta_{s,k}=\frac{\sqrt{1+\Psi_{s,k}p_k}-1}{\Psi_{s,k}}$ is the value that achieves equal rate for both the strong and weak users at the cell $k$ (i.e., $\eta$ is the root of the equation resulting from equating (\ref{Rsk}) to (\ref{Rwk}), where $P_{s,k}=\eta$ and $P_{w,k}=p_k-\eta$).
%achieves that $R_{s,k}=R_{w,k}$, where $R_{s,k}$ is given by
At this stage, we are capable of finding closed-form solutions for the joint users' power and link selection problems for every cell $k$, both for the cases when $x_k=0$ or $x_k=1$. The chosen solution of every cell $k$ is then the pair $P_{s,k}$ and $x_k$ that maximizes the utility function in (\ref{SRT1})). Algorithm \ref{Alg1} summarizes these procedures.
\begin{algorithm}
 \caption{Find the vectors $\mathbf{x}$ and $\mathbf{P}$, given $\mathbf{p}$}
 \label{Alg1}
 \begin{enumerate}
 \item For each AP $k=1,\cdots,N$;
 \item Find $P_{s,k,0}$ and $P_{s,k,1}$, using (\ref{Pjk0}) and (\ref{Pjk1}), respectively, i.e., when $x_k=0$ and $x_k=1$.
 \item Choose the pair $P_{s,k}$ and $x_k$ that maximizes the objective function in (\ref{SRT1}).
 \end{enumerate}
 \end{algorithm}
\subsection{Optimizing the AP Transmit Power}
Recall that the above per AP-formulation (\ref{SRT1}) of the original problem (\ref{SRT}) only holds for a fixed AP transmit power $p_k$. The papers now solves for the transmit power $\mathbf{p}$ that maximizes problem (\ref{SRT}). The proposed solution is iterative in nature, as each AP shares its instantaneous transmit power and users' channels with other APs. The idea is that the AP $k$ uses the shared information of the users' channels and the AP transmit powers to calculate the objective function in (\ref{SRT}) in order to find a local optimal $p_k$, using the golden section method. We define the minimum transmit power that can achieve constraints (\ref{SRT1b})-(\ref{SRT1h}) as $P_{min,k}=\min\big(P_{max}, \frac{A^2-A}{\Psi_{s,k}}+\frac{A-1}{\Psi_{w,k}}\big)$, where $A=2^{2R_{th}/B_v}$ and $P_{max}=(I_H-b)^2$. Algorithm \ref{Alg2} can be used to find a joint solution of the vectors $\mathbf{P}$, $\mathbf{p}$, and $\mathbf{x}$.
\begin{algorithm}
 \caption{Find the vectors $\mathbf{x}$, $\mathbf{P}$, and $\mathbf{p}$}
 \label{Alg2}
 \begin{enumerate}
\item[1] Run Algorithm \ref{Alg1}, when $p_k=P_{max} \forall k=1,\ldots N$.
\item[2] Starting from the highest interfering AP to the lowest one, for each AP,
\item[3] Assign $m=P_k$, $n=P_{min,k}$, $\theta=1.618$
\item[4] While $m-n\leq \epsilon$
\item[5] Implement Algorithm \ref{Alg1}, when $p_k=a$, where $a=(\theta-1)n+(2-\theta)m$ and set the resulted objective function in (\ref{SRT}) as $R_a$
\item[5] Implement Algorithm \ref{Alg1}, when $p_k=b$, where $b=(2-\theta)n+(\theta-1)m$ and set the resulted objective function in (\ref{SRT}) as $R_b$
\item[6] If $R_a>R_b$, set $m=a$, else set $n=a$.
\item[7] end while
\item[8] Set $p_k=(m+n)/2$
\end{enumerate}
 \end{algorithm}
Algorithm \ref{Alg2} first finds the joint users' power and link selection with the worst interference case (i.e., all APs transmit with total available power). Under the assumption that the APs share their channels and transmit power information, Algorithm \ref{Alg2} circulates over the APs to find a local optimal solution of each AP's transmit power and then update the shared information. In other words, at the $k$th iteration, Algorithm \ref{Alg2} uses the shared information to find the transmit power of AP $k$ (using the golden section method), updates the shared information, moves on to the AP $k+1$ to do the same process, and so on. This process can be repeated overall the APs several times.  At each iteration, it can be seen that the updated transmit power of the considered AP either increases the objective function in (\ref{SRT}) or makes it fixed, which proves the convergence of Algorithm \ref{Alg2}. In the simulation results, we numerically prove the convergence of Algorithm \ref{Alg2} as shown in Fig. \ref{Itr_SR}.   
 
 To analyse the computational complexity of Algorithm \ref{Alg2}, it can be seen that at each iteration, Algorithm \ref{Alg1} is implemented twice. Algorithm \ref{Alg1} just solves three equations, which are (\ref{Pjk0}), (\ref{Pjk1}), and the objective function in (\ref{SRT}). Therefore, if the number of iterations is $I$, the complexity of Algorithm \ref{Alg2} is of order $O(6I)$, where in each iteration, one AP updates its transmit power. We should note that the number of iterations $I$ could be more than or equal the number of APs $N$ due to that Algorithm \ref{Alg2} must circulate over all the APs at least one round.  
%\begin{table}[!t]
%\centering
%\caption{Simulation Parameters}
%\label{table1}
%\begin{tabular}{|p{.27\textwidth} | p{.17\textwidth} |}
%\hline
%  Name of the Parameter & Value of the Parameter\\
% \hline
%
%  Maximum bandwidth of VLC AP, $B$ & $20$ MHz  \\
%
%  The physical area of a PD, $A_{p}$ & $1$\ cm$^2$ \\
%   Half-intensity radiation angle, $\theta_{1/2}$ & $60^o$\  \\
%  Gain of optical filter, $g_{of}$ & $1$  \\
%  Refractive index, $n$ & 1.5 \\
%   Efficiency of converting optical to electric, $\rho$& $0.53$ [A/W]\\
%   Noise power spectral density of LiFi, $N_0$ & $10^{-21}$\ A$^2$/Hz  \\
% %The PD responsivity, $R_{pd}$& 0.53\\
%  Maximum input bias current, $I_H$ & $600$ mA  \\
%
%  Minimum input bias current, $I_L$ & $400$ mA  \\
%  Fill factor, $f$ &0.75\\
%  LEDs' power, $\nu$ & 10 W/A\\
%
%  Thermal voltage, $V_t$ & 25 mV \\
%
%  Dark saturation current of the PD, $I_0$ & $10^{-10}$ A\\
%  LED height, &$3$ m\\
%  User height & $0.85$\\
%  Monte-Carlo for user distribution,  & 200 different user distributions\\
%
%
%  \hline
%
%  RF    \\
%  \hline
%  The  breakpoint distance & 5 m\\
%  RF bandwidth assigned for a user & 8 MHz\\
%  Central carrier frequency  & 2.4 GHz\\
%  Angle of arrival/departure of LoS & 45$^o$\\
%  Shadow fading standard deviation (before the breakpoint) & 3 dB  \\
%  Shadow fading standard deviation (after the breakpoint) & 5 dB  \\
%  Spectral density of the noise power & -174 dBm/Hz\\
%  \hline
%\end{tabular}

% \end{table}

\begin{figure} [!t]
\centering
\includegraphics[width=3.5in]{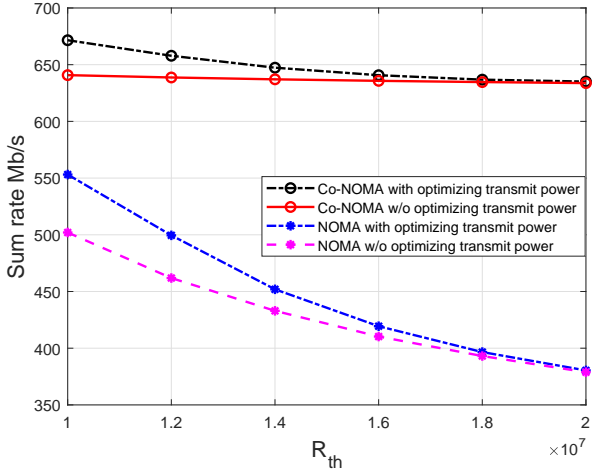}
\caption{The effect of $R_{th}$ on sum-rate.}
\label{R_SR}
\end{figure}
 \section{Simulation Results}
 \label{SResult}
%\begin{figure} [!t]
%    \centering
%    \label{R_SR}
%  \subfloat[Sum-rate versus $R_{th}$.]{
%       \includegraphics[width=0.52\linewidth, height=4.8cm]{R_SR2.eps}}
%  \subfloat[Fairness versus $R_{th}$.]{
%        \includegraphics[width=0.52\linewidth, height=4.8cm]{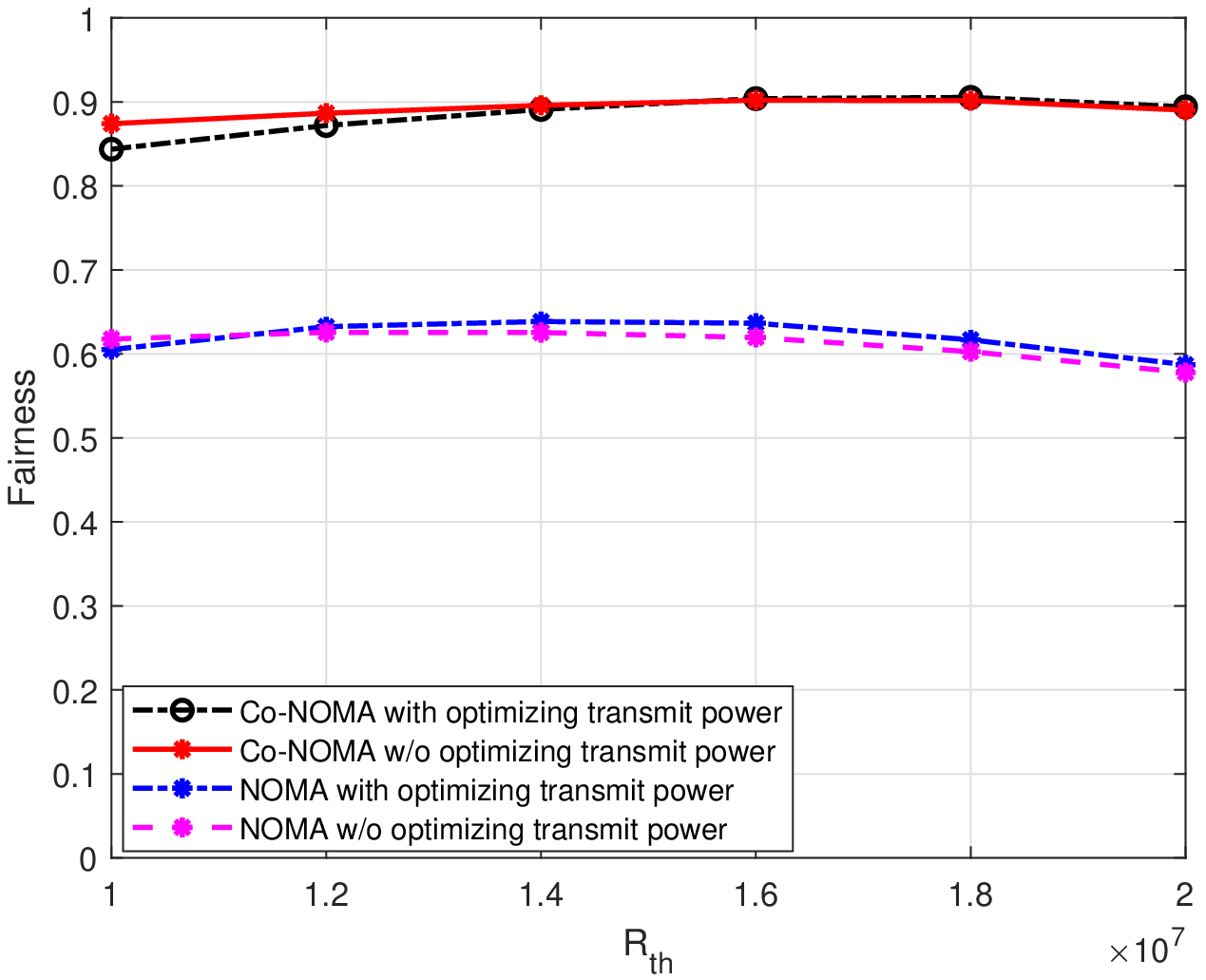}}
%    \label{R_F}
%    \caption{The effect of $R_{th}$ on sum-rate and fairness.}
%    \label{Rth5}
%\end{figure}

 In this section, we assess the performance of the proposed algorithms in a Co-NOMA hybrid VLC/RF system. In the simulations, we illustrate how changing the required QoS and increasing the average interference would affect the sum-rate and fairness. Note that Jain's fairness index is used to measure the system fairness \cite{jain1984quantitative}.
%$\frac{\left(\sum_{k=1}^N \left( R_{s,k}+(1-x_{k})R_{w,k}+x_{k}R_{w,k}^{HL}\right)\right)^2}{2N\left(\sum_{k=1}^N \left( R_{s,k}^2+(1-x_{k})R_{w,k}^2+x_{k}(R_{w,k}^{HL})^2\right)\right)}$.
The number of the APs in the ceiling is set to $16$, and the separation distance between them is set to $2.5$ m. Simulation parameters related to the channel values and transmit powers are chosen similar to Table II in reference \cite{CoNOMA}. We evaluate the proposed solutions through Mote-Carlo simulations, where each point in the following figures is the average of 200 different users' distributions within the restrictions illustrated in the System Model Section.

\begin{figure} [!t]
\centering
\includegraphics[width=3.3in]{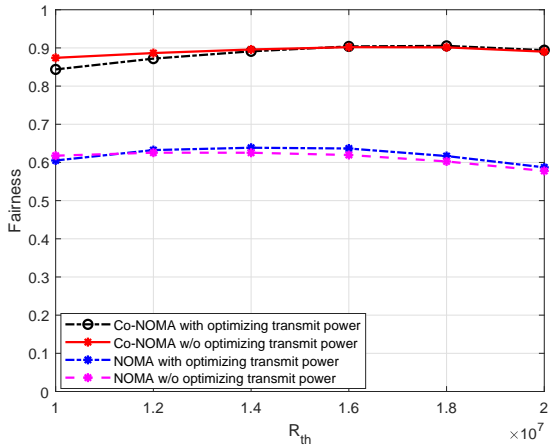}
\caption{The effect of $R_{th}$ on fairness.}
\label{R_F}
\end{figure}

%Fig. 3 shows the effect of $R_{th}$ on the sum rate and fairness of the proposed system. It can be seen that the scheme Co-NOMA outperforms the traditional scheme NOMA in terms of sum rate and fairness and when the transmit power is  either optimized or not. The significant improvement in performance in Co-NOMA comes from the fact that each weak user has two options to select either to be served through the hybrid RF/VLC link or the direct link, while in NOMA scheme each weak user has only one option which is to be served through the direct VLC link.

Fig. \ref{R_SR} plots the system sum-rate versus the target data rate $R_{th}$, by assuming that all users in the system require the same $R_{th}$. The figure shows that increasing the target data rate decreases the system sum-rate, especially under the non-cooperative NOMA scheme. This is the case because increasing $R_{th}$ at the weak user decreases the strong user power, which decreases the overall sum-rate. Fig. \ref{R_SR} particularly shows that the Co-NOMA scheme outperforms the non-cooperative NOMA scheme in terms of sum-rate, both with and without the power optimization step. The significant improvement in the performance in Co-NOMA comes from the fact that each weak user can select between the hybrid RF/VLC and the direct VLC links, while in NOMA scheme, each weak user can only be served through the direct VLC link.

To illustrate the system fairness performance of the proposed scheme, Fig. \ref{R_F} plots Jain's index versus the target data rate $R_{th}$. The figure shows that the fairness of Co-NOMA is much better than the fairness of NOMA systems. This is particularly the case because the weak users in NOMA suffer from inter-cell interference, while the edge users in Co-NOMA can be served through the center (strong) users, which are in relatively good channel conditions.

%the effect of $R_{th}$ on the fairness is negligible because the range of changing $R_{th}$ is small and we keep it in this range to guarantee that the successive interference cancellation constraint achieved.

%Fig. \ref{Alpha_SR} shows how increasing the average received interference at weak users would affect the system sum rate and fairness. This can be implemented by increasing the value of $\alpha$, where increasing $\alpha$ increases the probability that the weak user be at the edge of the cell (as shown in Fig. \ref{SM2}), which increases the average received interference at system weak users. The figure shows that increasing the interference at the weak users approximately doesn't affect the sum rate and the fairness in Co-NOMA system, while it deteriorates the performance of NOMA scheme significantly. This is because increasing the  interference at the direct VLC links of the weak users does not affect the hybrid VLC/RF link.

\begin{figure} [!t]
\centering
\includegraphics[width=3.3in]{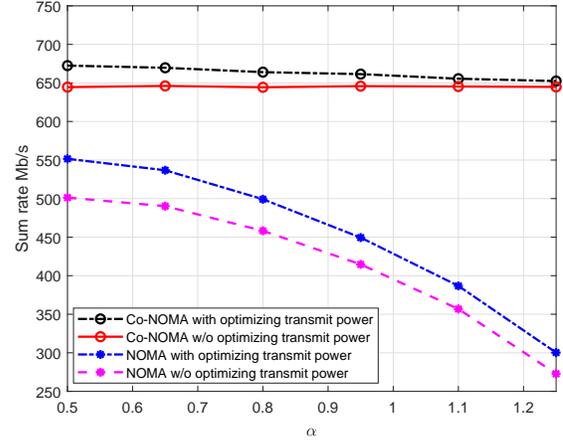}
\caption{The effect of increasing the interference at weak users on sum-rate.}
\label{Alpha_SR}
\end{figure}

\begin{figure} [!t]
\centering
\includegraphics[width=3.3in]{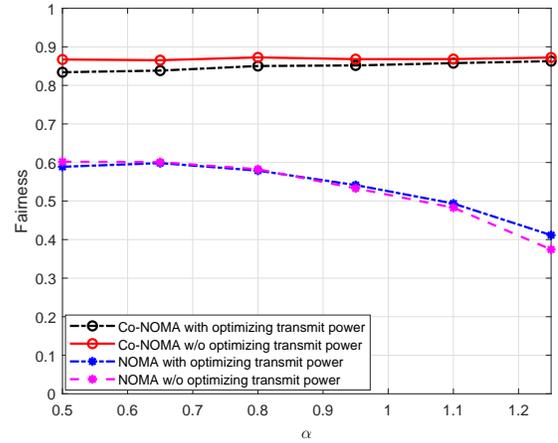}
\caption{The effect of increasing the interference at weak users on the fairness.}
\label{Alpha_F}
\end{figure} 

Fig.~\ref{Alpha_SR} shows how increasing the average received interference at weak users affects the system's sum rate. This is implemented by increasing the value of $\alpha$, i.e., the average interference received by the weak users, while keeping the area of the strong user constant. Such increase in $\alpha$ pushes the distribution area of the weak users to the cell-edge. Fig.~\ref{Alpha_SR} shows that increasing the interference at the weak users does not significantly affect the sum-rate in Co-NOMA, but deteriorates the performance of the non-cooperative NOMA significantly. Such behavior is mainly due to the fact that as the interference increases, the weak users in Co-NOMA migrate from being served through the interfered VLC links to being served through the hybrid VLC/RF links, which highlights the pronounced role of the proposed scheme in mitigating interference, especially in high interference regimes.

Fig. \ref{Alpha_F} also shows the superiority of Co-NOMA over the non-cooperative NOMA in terms of fairness. As the interference at the weak users increases, the fairness of NOMA decreases because the interference only increases at the weak users (not at the strong users), which increases the difference between the data rates of the weak users and the strong users. 
%\begin{figure}[!t]
%    \centering
%    \label{Alpha_SR}
%  \subfloat[$\alpha$ versus sum-rate]{
%       \includegraphics[width=0.5\linewidth, height=4.8cm]{Alpha_SR2.eps}}
%  \subfloat[$\alpha$ versus fairness]{
%        \includegraphics[width=0.5\linewidth, height=4.8cm]{Alpha_F2.eps}}
%    \label{Alpha_F}
%    \caption{The effect of increasing the interference at weak users on the sum-rate and fairness}
%    \label{Alphak}
%\end{figure}

\begin{figure}[!h]
\centering
\includegraphics[width=3.3in]{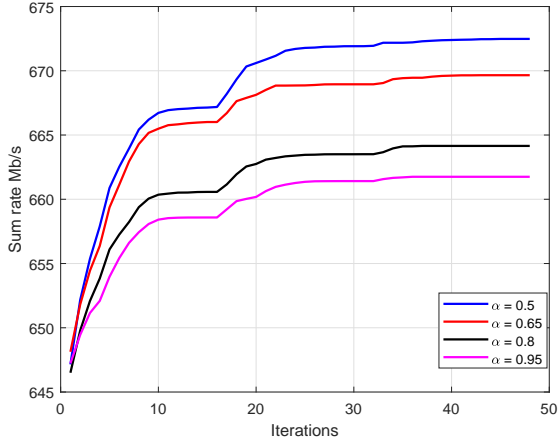}
\caption{Convergence of Algorithm \ref{Alg2}.}
\label{Itr_SR}
\end{figure}
Finally, Fig. \ref{Itr_SR} shows the convergence of Algorithm \ref{Alg2} by plotting the the sum-rate versus the number of iterations for different levels of average interference received by the weak users. The figure demonstrates that Algorithm \ref{Alg2} converges in few iterations, e.g., in 25 iterations when $\alpha=0.95$. In each iteration, one AP uses Algorithm \ref{Alg1} to find the users' power and link selection vector, and then uses these solutions to find a local optimal solution of the total transmit power of that AP. Algorithm \ref{Alg2} proposes to go over all APs several rounds, and this is why the number of iterations scales the number of APs.  
\section{Conclusion}
VLC are expected to play a major role in meeting the ambitious metrics of future wireless systems. This paper applies the Co-NOMA scheme in a multicell VLC network, and maximizes the sum-rate by determining the power and link selection vectors under power and QoS constraints. The paper solves such a non-convex problem by first finding closed form solutions of the joint users' powers and link selection, for a fixed AP power. The APs' transmit powers are then solved in an outer loop using the golden section method. Simulation results show how the proposed scheme outperforms non-cooperative NOMA scheme in terms of sum-rate and fairness. In addition, simulation results show that Algorithm \ref{Alg2} is convergent and improves the system performance in terms of sum-rate, especially in the non-cooperative NOMA scheme. 
\newpage

\bibliography{mylib}
\bibliographystyle{IEEEtran}

\end{document}